\newtheorem{theorem}{Theorem}
\begin{document}

\sloppy

\title{Coded Compressive Sensing:\\A Compute-and-Recover Approach}

\author{
  \IEEEauthorblockN{Namyoon Lee and Song-Nam Hong\\}
\thanks{N. Lee  is with Intel Labs,
2200 Mission College Blvd, Santa Clara, CA 95054, USA (e-mail:namyoon.lee@gmail.com). S.-N. Hong  is with Ericsson Silicon Valley Research Lab. 300 Holger Way,
San Jose, CA 95134 (e-mail:sunny7955@gmail.com) }
}

%% Create the title:
\maketitle

\begin{abstract}
In this paper, we propose \textit{coded compressive sensing} that recovers an $n$-dimensional integer sparse signal vector from a noisy and quantized measurement vector whose dimension $m$ is far-fewer than $n$. The core idea of coded compressive sensing is to construct a linear sensing matrix whose columns consist of lattice codes. We present a two-stage decoding method named \textit{compute-and-recover} to detect the sparse signal from the noisy and quantized measurements. In the first stage, we transform such measurements into noiseless finite-field measurements using the linearity of lattice codewords. In the second stage, syndrome decoding is applied over the finite-field to reconstruct the sparse signal vector. A sufficient condition of a perfect recovery is derived. Our theoretical result demonstrates an interplay among the quantization level $p$, the sparsity level $k$, the signal dimension $n$, and the number of measurements $m$ for the perfect recovery. Considering 1-bit compressive sensing as a special case, we show that the proposed algorithm empirically outperforms an existing greedy recovery algorithm.

\end{abstract}

%\begin{keywords}
%Compressed sensing, Lattice codes,
%\end{keywords}

%%%%%%%%%%%%%%%%%%%%%%%%%%%%%%%%%
%%%%%%%%%%%%%%%%%%%%%%%%%%%%%%%%%
\vspace{-0.4cm}
\section{Introduction}
Compressive sensing (CS)\cite{ CandesRombergTao2006,Donoho} is a promising technique that recovers a high-dimensional signal represented by a few non-zero elements using far-fewer measurements than the signal dimension. This technique has immense applications ranging from image compression to sensing systems requiring lower power consumption. The mathematical heart of CS is to solve a under-determined linear system of equations by harnessing an inherent sparse structure in the signal.

Let ${\bf x}\in\mathbb{R}^n$ and ${\bf \Phi}\in\mathbb{R}^{m\times n}$ be a real-valued spare signal vector and a compressive sensing matrix that linearly projects a high-dimensional signal in $\mathbb{R}^n$ to a low-dimensional signal in $\mathbb{R}^m$ where $m<n$, respectively. Formally, the noiseless CS problem is to reconstruct sparse signal vector ${\bf x}$ by solving the following $\ell_0$-minimization problem:
\begin{align}
\min \|{\bf x}\|_0~~ {\rm subject}~{\rm to}~~ {\bf y}={\bf \Phi}{\bf x}, \label{eq:L0_CS}
\end{align}
where the collection of non-zero elements' positions in ${\bf x}$, ${\rm supp}({\bf x})$, is defined as $\mathcal{T}=\{i\mid  {\bf x}_i\neq 0,  i\in [1:n]\}$, with cardinality $|\mathcal{T}|=k$. Unfortunately, computational complexity for solving this problem is NP-hard, implying that, in practice, it is computationally infeasible to obtain the optimal solution when $n$ is very large. There exists many practical algorithms that perfectly  reconstruct the sparse signal with polynomial time computational complexity, provided that a measurement matrix has a good incoherence property. Greedy sparse signal recovery algorithms \cite{TroppGilbert2007,Needell2010_CoSaMP,Dai2009_SP,Namyoon2014} became popular due to the computational efficiency in implementing these algorithms.

In practice, obtaining the measurement vector ${\bf y}$ with infinite precision is infeasible. This is because, in many image sensors or communication systems, signal acquisition is performed by using analog-to-digital converters (ADCs) that quantizes each measurement to a predefined value with a finite number of bits. This quantization process makes difficulty in recovering sparse signals, as it might give rise to significant measurement errors, especially when the number of quantization bits is small. Numerous sparse signal recovery algorithms with quantized measurements in \cite{Sun_Goyal_2009,Dai_2009,BoufounosBaraniuk2008,BIHT2013,BP_onebit,Haupt_Baraniuk_2011} were proposed to overcome the impact of the quantization errors. In particular, under the premise that each measurement is quantized with just \textit{one-bit} (i.e., an extreme case of quantization errors), a compressive sensing problem was introduced in \cite{BoufounosBaraniuk2008}. For given ${\bf \Phi}$ and ${\bf x}$, the measurements are obtained using their signs as
\begin{align}
{\bf y} ={\rm sign}\left({\bf \Phi}{\bf x}\right),
\end{align}
where the measurement vector is in the Boolean cube, i.e., ${\bf y}\in \{-1,1\}^m$. It was shown that, with the one-bit measurements, sparse signal vectors with unit-norm can be recovered with high probability by convex optimization techniques \cite{BP_onebit} or iterative greedy algorithms \cite{BIHT2013}.

In this paper, we study a generalized compressive sensing problem in which each measurement is quantized with $p$ levels where $p$ ($\geq 2$) is a prime number. We also consider a quantized source signal, i.e., the non-zero elements of a sparse signal are chosen from a set of integer values, i.e., ${\bf x}\in \mathbb{Z}^n_p$. Such setting can be found in  many applications. For instance, in a random access wireless system, active users among all users in the system send quadrature amplitude modulated symbols (i.e., $p$-level quantized signals) to a receiver, and it detects the active users' signals using a $p$-level ADC.

A fundamental question we ask in this paper is: what is the sufficient condition for the perfect recovery of the integer sparse signal with $p$-level per measurement in the presence of Gaussian noise? To shed light on the answer to this question, we develop a new sparse signal recovery framework, which is referred to as ``\textit{coded compressive sensing}.'' The core idea of coded compressive sensing is to exploit both source and channel coding techniques in information theory. The proposed scheme consists of two cascade encoding and decoding phases. The first phase of encoding is the \textit{compression phase}, in which a high dimensional sparse signal vector in ${\mathbb Z}_p^n$ is compressed to a low dimensional signal vector using a parity check matrix of a maximum distance separable (MDS) linear code. The second phase is the \textit{dictionary coding phase}. In this phase, each dictionary vector (each column vector of the parity check matrix) is encoded to a coded dictionary vector by exploiting a (near) capacity-achieving lattice code for a Gaussian channel. We propose  a two-stage decoding method called ``\textit{compute-and-recover}.'' In the first stage of decoding, a linear combination of the encoded dictionary vectors corresponding the non-zero elements in ${\bf x}\in\mathbb{Z}^n_p$ is decoded. We call this as the dictionary equation decoding stage that produces a noise-free measurement vector. Once the dictionary equation is perfectly decoded, in the second stage of decoding, we apply syndrome decoding to the equivalent finite field representation of the dictionary equation for the sparse signal recovery. Using the proposed scheme, we derive a lower bound of the number of measurements for the perfect recovery as a function of important system parameters: the quantization level $p$, the sparsity level $k$, the signal dimension $n$, and the number of measurements $m$. Considering $p=2$ as a special case, we compare the proposed scheme with existing algorithms developed for the one-bit compressive sensing problem \cite{BoufounosBaraniuk2008}. Numerical results show that the proposed scheme outperforms than binary iterative hard thresholding (BITH) \cite{BIHT2013} in a low signal-to-noise ratio (SNR) regime.

\vspace{-0.5cm}
\section{Coded Compressive Sensing Problem}
In this section, we present a coded compressive sensing framework for an integer sparse signal recovery in the presence Gaussian noise.

%%%%%%%%%%%%%%%%%%%%%%%%%%%%
\vspace{-0.4cm}
\subsection{Signal Model}
We are interested in a sparse signal detection problem from a compressed measurement in the presence of noise. Let ${\bf x}=[x_1,x_2,\ldots,x_n]^{\top} \in \mathbb{Z}_p^{n\times 1}$ be an unknown sparse signal vector whose sparsity level is equal to $k$, i.e., $\|{\bf x}\|_0=k \ll n$. The measurement equation of quantized compressed sensing is given by
\begin{align}
{\bf y}=S_{p}\left({\bf \Phi}{\bf x} +{\bf n}\right), \label{eq:system_eq}
\end{align}
where $S_{p}$ denotes the $p$-level {\em scalar} quantizer that applied component-wise, and ${\bf y}\in \mathbb{R}^{m\times 1}=[y_1,\ldots,y_m]^{\top}$ and ${\bf n}\in \mathbb{R}^{m\times 1}=[n_1,\ldots,n_m]^{\top}$ denote the measurement and noise vector, respectively. All entries of the noise vector are assumed to be independent and identically distributed (IID) Gaussian random variables with zero mean and variance $\sigma^2/m$, i.e., $n_{i} \sim \mathcal{N}\left(0,\frac{\sigma^2}{m}\right)$ for all $i$.

Our objective is to reliably estimate the unknown sparse signal vector ${\bf x}$ given ${\bf y}$ in the presence of Gaussian noise ${\bf n}$, by appropriately constructing a linear measurement matrix ${\bf \Phi}$ and $p$-level scalar quantizer $S_{p}$. We define a sparse signal recovery decoder $\mathcal{D}:\mathbb{R}^m \rightarrow \mathbb{Z}_p^n$, which maps the measurement vector ${\bf y}$ to an estimate ${\bf \hat x}=\mathcal{D}({\bf y})$ of the original sparse signal vector ${\bf x}$. It is said that the average probability of error is at most $\epsilon>0$ if $\mathbb{P}[{\bf \hat x}\neq {\bf x}] \leq \epsilon$.

%\begin{align}
%\mathbb{P}[{\bf \hat x}\neq {\bf x}] \leq \epsilon.
%\end{align}

\vspace{-0.5cm}
\subsection{Sensing Matrix Construction}
A linear encoding function is represented by a sensing matrix ${\bf \Phi}\in\mathbb{R}^{m\times n}$, which linearly maps the $n$-dimensional sparse vector to an $m$-dimensional output vector, where $m\ll n$. We construct the sensing matrix ${\bf \Phi}$ using the proposed idea, which is referred to as \textit{dictionary coding.}

\subsubsection{Dictionary Basis Vector Selection} %
Let ${\bf \tilde H}\in\mathbb{F}^{{\tilde m}_1\times n}_q$ be a parity check matrix of a $q$-ary $[n,b]$ MDS code, where
$q$ is a prime power $p^s$ for any positive integer $s\in\mathbb{Z}^+$. In this paper we focus on a $q$-ary $[n,b]$ Reed-Solomon (RS) code with field size constraint $q\geq n$. Thus, the parity check matrix of the RS code has ${\tilde m}_1$ full-rank where ${\tilde m}_1=n-b \in \mathbb{Z}^{+}$. The $\ell$th column vector of ${\bf \tilde H}$ is denoted by ${\bf \tilde h}_{\ell}\in \mathbb{F}_{p^s}^{{\tilde m}_1}$ where $\ell\in [1:n]$. We define the one-to-one mapping $h:\mathbb{F}_{p^s}^{{\tilde m}_1} \rightarrow \mathbb{F}_{p}^{{\tilde m}_1s}$ that maps each element of $\mathbb{F}_{p^s}$ into an $s$-length word in $\mathbb{F}_{p}$. For instance, when $p=2$, it is possible to express an element of $\mathbb{F}_{2^s}$ as a binary vector of length $s$. Using this mapping, we can transform each dictionary vector ${\bf \tilde h}_{\ell} \in \mathbb{F}_{p^s}^{{\tilde m}_1}$ into ${\bf h}_{\ell} = h({\bf \tilde h}_{\ell})\in \mathbb{F}_{p}^{{m}_1}$ where $m_1={\tilde m}_1 s$. The transformed column vector ${\bf h}_{\ell}$ is referred to as the $\ell$th dictionary basis vector.

\subsubsection{Dictionary Coding via a Lattice Code}
Dictionary coding is to map a dictionary basis vector in ${\mathbb F}^{m_1}_p$ into a lattice point in $\mathbb{R}^m$ using lattice encoding where $m\geq m_1$.

We commence by providing a brief background for a lattice construction. Let $\mathbb{Z}$ be the ring of Gaussian integers and $p$ be a Gaussian prime. Let us denote the addition over $\mathbb{F}_{p}$ by $\bigoplus$, and let $g:\mathbb{F}_{p} \rightarrow \mathbb{R}$ be the natural mapping of $\mathbb{F}_{p} $ onto $\{a:a\in \mathbb{Z}_p\}\subset\mathbb{R}$. We recall the nested lattice code construction given in [14]. Let $\Lambda = \{ {\boldsymbol \lambda} = {\bf T}{\bf z} : {\bf z} \in \mathbb{Z}^m\}$ be a lattice in ${\mathbb R}^m$, with full-rank generator matrix ${\bf T}\in {\mathbb R}^{m\times m}$. Let $\mathcal{C} = \{{\bf c} = {\bf w}{\bf G} : {\bf w} \in \mathbb{F}_{p}^{m_1} \}$ denote a linear code over $\mathbb{F}_{p}$ with block length $m$ and dimension $m_1$, with generator matrix ${\bf G}\in \mathbb{F}_p^{m_1\times m}$ where $m\geq m_1$. The lattice $\Lambda_1$ is defined through ``\textit{construction A}'' (see \cite{Nazer} and references therein) as
\begin{align}
\Lambda_1 =p^{-1}g(\mathcal{C}){\bf T} + \Lambda,
\end{align}
where $g(\mathcal{C})$ is the image of $\mathcal{C}$ under the mapping function $g$. It follows that $\Lambda \subseteq \Lambda_1 \subseteq p^{-1}\Lambda$ is a chain of nested lattices, such that $\left|\frac{\Lambda_1}{\Lambda}\right| = p^{m_1}$ and $\left|\frac{ p^{-1}\Lambda}{\Lambda_1}\right| = p^{(m-m_1)}$.

For a lattice $\Lambda$ and ${\bf r} \in \mathbb{R}^m$, we define the lattice quantizer ${\rm Q}_\Lambda({\bf r})=\arg \min_{{\bf r}\in \Lambda}\|{\bf r}-{\boldsymbol \lambda}\|_2$, the Voronoi region $\mathcal{V}_{\Lambda}=\{{\bf r}\in \mathbb{R}^m :{\rm Q}_\Lambda({\bf r})={\bf 0}\}$ and $[{\bf r}]\mod \Lambda ={\bf r}-{\rm Q}_\Lambda({\bf r})$. For $\Lambda$ and $\Lambda_1$ given above, we define
the lattice code $\mathcal{L}=\Lambda_1 \cup \mathcal{V}_{\Lambda}$ with rate $R = \frac{1}{m} \log |\mathcal{L}| = \frac{m_1}{m}\log(p)$.

Construction A provides an encoding function that maps a dictionary basis vector ${\bf h}_{\ell} \in \mathbb{F}_{p}^{m_1} $ into a codeword in $\mathcal{L}$. Notice that the set $p^{-1}g(\mathcal{C}){\bf T}$ is a system of coset representatives of the cosets of $\Lambda$ in $\Lambda_1$. Hence, the encoding function $f : \mathbb{F}^{m_1}_p \rightarrow \mathcal{L}$ is defined by
\begin{equation}
f({\bf h}_{\ell}) = \left[p^{-1}g({\bf c}_{\ell}){\bf T}\right] \mod \Lambda,
\end{equation}where
\begin{equation}
({\bf c}_{\ell})^{\top} = ({\bf h}_{\ell})^{\top}{\bf G}.
\end{equation}
Consequently, the $\ell$th codeword vector ${\bf t}_{\ell}$ is produced by the encoding function
\begin{align}
{\bf t}_{\ell}=f({\bf h}_{\ell}),
\end{align}
where each dictionary vector is chosen from lattice codewords in the nested lattice codebook $\mathcal{L}$, i.e.,
${\bf t}_{\ell} \in \mathcal{L}$. Using this construction method, we have a linear sensing matrix consisting of $n$ column vectors as
\begin{align}
{\bf \Phi} =\left[{\bf t}_1, {\bf t}_2, \ldots, {\bf t}_n\right].
\end{align}
The average power of each codeword is assumed to be
\begin{equation}\label{eq:power_const}
\frac{1}{m}\mathbb{E}[\|{\bf t}_{\ell}\|_2^2]\leq 1.
\end{equation}
Finally, in this paper, we choose the shaping lattice $\Lambda$ as a {\em cubic} lattice, namely ${\bf T} = \tau{\bf I}$, which enables that a lattice decoding is implemented by a scalar quantizer (see \cite{Hong_Caire2011} for more details). Here, $\tau$ is chosen to satisfy the power constraint in (\ref{eq:power_const}) as $\tau = \sqrt{8} \mbox{ for } p=2 \mbox{ and } \tau=\sqrt{12} \mbox{ for } p \geq 3$. Then, the element-wise SNR is defined as
\begin{align}
{\rm SNR} = \frac{\tau^2}{\sigma^2/m}.
\end{align}

\subsection{Proposed Scalar Quantizer}

We propose a $p$-level scalar quantizer called {\em sawtooth transform} as depicted in Fig.~\ref{fig:1}, which can be implemented by the modulo operation followed by the scalar quantization as
\begin{equation}\label{eq:quantizer}
S_{p}(\cdot) = \left[Q_{(\tau/p)\mathbb{Z}}(\cdot)\right] \mod \tau\mathbb{Z}.
\end{equation}

\section{Main Result}

In this section, we characterize the sufficient condition for the exact recovery of an integer sparse signal vector. The following theorem is the main result of this paper.

\begin{theorem}\label{th1}
%Suppose that compression matrix ${\bf \tilde H}$ is chosen from the parity check matrix of a $q$-ary $[n,b]$ RS code where $q\geq n$. Let be $N$ is a Gaussian random variable $N\sim \mathcal{N}\left(0,\frac{\sigma^2}{m}\right)$.
%Then, the sparse signal vector ${\bf x}\in \mathbb{Z}^{n}_p$ with $\|{\bf x}\|_0\leq k$ is perfectly reconstructed with the probability of error at most $\epsilon$ for large enough $n$, provided that
The proposed coded compressive sensing method perfectly reconstructs the sparse signal vector ${\bf x}\in \mathbb{Z}^{n}_p$ with $\|{\bf x}\|_0\leq k$, with vanishing error probability for large enough $n$, provided that
\begin{align}
m\geq \frac{2k\log_p{n}}{1- H_p(Z)},
\end{align}where $H_p(\cdot)$ represents a $p$-ary entropy function and $Z$ denotes an effective quantized noise obtained from the $p$-level quantizer as
$Z = g^{-1}\left(\left[Q_{\mathbb{Z}}\left( N \right)\right] \mod p\mathbb{Z} \right)$,
%\begin{equation}
%Z = g^{-1}\left(\left[Q_{\mathbb{Z}}\left( \frac{p}{\tau}N \right)\right] \mod p\mathbb{Z} \right),
%\end{equation}
where $N\sim \mathcal{N}\left(0,\frac{p^2}{{\rm SNR}}\right)$.
\end{theorem}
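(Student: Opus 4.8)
The plan is to analyze the two-stage \emph{compute-and-recover} decoder, show that each stage succeeds with the claimed resources, and combine the two requirements into the stated bound on $m$. The first stage (dictionary-equation decoding) reduces the noisy Gaussian measurement to a noiseless finite-field equation. For any $\mathbf{x}\in\mathbb{Z}_p^n$ with support $\mathcal{T}$, linearity gives $\mathbf{\Phi}\mathbf{x}=\sum_{\ell\in\mathcal{T}}x_\ell\mathbf{t}_\ell$. Since each $\mathbf{t}_\ell$ is a Construction-A coset representative of $\Lambda$ in $\Lambda_1$, the integer combination $[\mathbf{\Phi}\mathbf{x}]\bmod\Lambda$ again lies in $\Lambda_1/\Lambda$ and, under the Construction-A isomorphism $\Lambda_1/\Lambda\cong\mathbb{F}_p^{m_1}$, corresponds to the codeword $\mathbf{u}\mathbf{G}$ with $\mathbf{u}=\bigoplus_{\ell\in\mathcal{T}}x_\ell\mathbf{h}_\ell$. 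The core of this step is to show that feeding $\mathbf{\Phi}\mathbf{x}+\mathbf{n}$ through the sawtooth quantizer $S_p$ with the cubic shaping lattice $\mathbf{T}=\tau\mathbf{I}$ collapses the Gaussian channel into a memoryless additive-noise channel over $\mathbb{F}_p$: coordinate-wise the signal is $\tau v_j/p$ with $v_j\in\{0,\dots,p-1\}$, the $\bmod\,\tau\mathbb{Z}$ operation folds the wrap-around, and rounding to the nearest multiple of $\tau/p$ returns $v_j\oplus z_j$. Normalizing the per-coordinate Gaussian noise (of variance $\sigma^2/m=\tau^2/\mathrm{SNR}$) by the cell width $\tau/p$ produces exactly $N\sim\mathcal{N}(0,p^2/\mathrm{SNR})$, so the effective finite-field noise is $Z=g^{-1}([Q_{\mathbb{Z}}(N)]\bmod p\mathbb{Z})$, as stated.

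Given this reduction, the received finite-field word is $\mathbf{u}\mathbf{G}\oplus\mathbf{z}$ with i.i.d.\ components $z_j\sim Z$. The additive-noise channel over $\mathbb{F}_p$ with noise $Z$ has $p$-ary capacity $1-H_p(Z)$, and since linear codes achieve the capacity of symmetric additive-noise channels (equivalently, Construction-A lattices built from good linear codes achieve the corresponding computation rate, cf.\ \cite{Nazer}), the codeword $\mathbf{u}\mathbf{G}$, and hence $\mathbf{u}$, is recovered with vanishing error probability provided the code rate obeys $m_1/m<1-H_p(Z)$, i.e.\ $m>m_1/(1-H_p(Z))$.

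Next I would treat the second stage (syndrome decoding). Because the mapping $h$ is $\mathbb{F}_p$-linear and each $x_\ell$ is a scalar in $\mathbb{F}_p\subset\mathbb{F}_{p^s}$, applying $h^{-1}$ to $\mathbf{u}$ returns $\tilde{\mathbf{u}}=h^{-1}(\mathbf{u})=\tilde{\mathbf{H}}\mathbf{x}\in\mathbb{F}_{p^s}^{\tilde m_1}$, which is precisely the RS syndrome of $\mathbf{x}$. As the RS code is MDS with minimum distance $\tilde m_1+1$, syndrome decoding uniquely corrects every error pattern of weight at most $\lfloor\tilde m_1/2\rfloor$; since $\|\mathbf{x}\|_0\le k$, the vector $\mathbf{x}$ is the unique weight-$\le k$ preimage of its syndrome whenever $\tilde m_1\ge 2k$, so this stage is error-free once the syndrome is correct.

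Finally I would combine the requirements and optimize the free parameters. Unique syndrome decoding forces $\tilde m_1\ge 2k$; the field-size constraint $q=p^s\ge n$ forces $s\ge\log_p n$, whence $m_1=\tilde m_1 s\ge 2k\log_p n$. Substituting into the first-stage condition yields $m\ge \frac{2k\log_p n}{1-H_p(Z)}$. A union bound over the two stages shows the total error probability is dominated by the first-stage decoding error, which tends to $0$ as $m\to\infty$, hence as $n\to\infty$, completing the argument. I expect the main obstacle to be the first step: rigorously verifying that the mod-$\Lambda$ reduction and the scalar sawtooth quantizer commute with the integer linear combination so that the Gaussian channel collapses \emph{exactly} to the per-symbol $\mathbb{F}_p$-channel with noise $Z$, and establishing the existence of a nested lattice (Construction-A linear code) that simultaneously respects the cubic shaping and achieves the rate $1-H_p(Z)$.
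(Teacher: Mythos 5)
Your proposal follows essentially the same two-stage argument as the paper: the sawtooth quantizer plus Construction-A linearity collapses the Gaussian channel to an additive-noise channel over $\mathbb{F}_p$ with noise $Z$ decodable at rate $m_1/m \le 1-H_p(Z)$, followed by RS syndrome decoding with $\tilde m_1 \ge 2k$ and $s=\log_p n$ to obtain the stated bound. Your statement of the unique-decoding radius as $\lfloor \tilde m_1/2\rfloor$ is in fact slightly more careful than the paper's $\lfloor(\tilde m_1+1)/2\rfloor$, but it leads to the same final inequality, so the two proofs coincide.
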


\begin{proof}

The proof of this theorem is based on the proposed two-stage decoding method called ``\textit{compute-and-recover}''. In the first stage, we decode an integer linear combination of coded dictionary vectors by removing noise, which essentially yields a finite-field sparse signal recovery problem. In the second stage, we apply syndrome decoding over the finite-field to reconstruct the sparse signal vector.

\vspace{-0.28cm}
\subsection{Step 1: Computation of Dictionary Equation}
In this stage, we decode a noise-free measurement vector ${\bf \tilde y} \in \mathbb{F}^{m_1}_p$ from ${\bf y} \in \mathbb{R}^{m}$ using the key property of a lattice code. Recall that dictionary vector is a lattice code; thereby, any integer-linear combination of lattice codewords is again a lattice codeword
\cite{Nazer}. Thus we have that $[\sum_{\ell\in \mathcal{T}}{\bf t}_{\ell}x_{\ell}] \mod \Lambda \in \mathcal{L} $ due to ${\bf x} \in \mathbb{Z}_p^{n \times 1}$. We will first exploit this fact to decode a noise-free measurement vector.

Letting $\mathcal{T}$ be the support set of ${\bf x}$, the noisy measurement vector with the $p$-level quantizer is given by
\begin{align}
{\bf y}&=S_{p}\left(\sum_{\ell\in \mathcal{T}}{\bf t}_{\ell}x_{\ell} +{\bf n}\right) \nonumber \\
&=\left[Q_{(\tau/p)\mathbb{Z}}\left(\sum_{\ell\in \mathcal{T}}{\bf t}_{\ell}x_{\ell} +{\bf n}\right)\right] \mod \tau\mathbb{Z},
\end{align} where the second equality follows from (\ref{eq:quantizer}).

We transform this noisy and quantized measurement into a noiseless finite-field measurement as follows. From the quantized sequence ${\bf y}$, we produce the sequence ${\bf \hat y} \in \mathbb{F}^{m}_p$ with components
\begin{align}
{\hat y}_{i} &= g^{-1}\left(\frac{p}{\tau}{\bf y} \right)\nonumber \\
&=g^{-1}\left(\left[Q_{\mathbb{Z}}\left(\frac{p}{\tau}\left(\sum_{\ell \in \mathcal{T}}t_{\ell,i} + n_{i} \right)   \right)   \right] \mod p\mathbb{Z}\right),
\end{align} for $i=1,\ldots,m$. Since $\frac{p}{\tau}t_{\ell,i} \in \mathbb{Z}$ by construction, and using the obvious identity $Q_{\mathbb{Z}}(v+\zeta) = v+Q_{\mathbb{Z}}(\zeta)$ with $v \in \mathbb{Z}$ and $\zeta \in \mathcal{R}$, we arrive at
\begin{equation}
{\bf \hat y} = \left(\bigoplus_{\ell \in \mathcal{T}}{\bf c}_{\ell}g^{-1}(x_{\ell}) \right) \oplus {\bf z},
\end{equation} where the elements of the discrete additive noise vector ${\bf z}$ are given by
\begin{equation}
z_{i} = g^{-1}\left(\left[Q_{\mathbb{Z}}\left(\frac{p}{\tau}n_{i}\right)\right] \mod p\mathbb{Z}\right),
\end{equation}for $i=1,\ldots,m$.
Since, by linearity, ${\bf v}=\bigoplus_{\ell \in \mathcal{T}}{\bf c}_{\ell}g^{-1}(x_{\ell})$ is a codeword of $\mathcal{C}$, the above channel can be considered as a point-to-point channel with discrete additive noise over $\mathbb{F}_{p}$. Then, we can reliably decode ${\bf v}$ if
\begin{equation}
\frac{m_1}{m}\leq 1 - H_p(Z),
\end{equation} This is an immediate consequence of the well-known fact that linear codes achieve the capacity of symmetric discrete memoryless channel\cite{Dobrushin}. From this result, we can obtain that the sufficient condition for the perfect recovery of the noise-free measurement vector is
\begin{align}
m\geq \frac{m_1}{1 - H_p(Z)}. \label{eq:lattice_rate}
\end{align}

\vspace{-0.3cm}
\subsection{Step 2: Recovery via Syndrome Decoding}

Recall that, in the first stage, the decoder has recovered the dictionary equation, i.e., ${\bf v}=\bigoplus_{\ell \in \mathcal{T}}{\bf c}_{\ell}{\tilde x}_{\ell}$ where ${\tilde x}_{\ell} = g^{-1}(x_{\ell})$. Using the linearity of code $\mathcal{C}$, we have:
\begin{align}
{\bf \tilde y}=\bigoplus_{\ell\in \mathcal{T}}{\bf h}_{\ell}{\tilde x}_{\ell},
\end{align}
where ${\bf \tilde y}$ represents the effective measurement vector in $\mathbb{F}^{m_1}_p$. As a result, the measurement equation can be equivalently rewritten in a matrix form over $\mathbb{F}_{p}$ as
\begin{align}
{\bf \tilde y} ={\bf H}{\bf \tilde x}.
\end{align}
where ${\bf H}\in\mathbb{F}_p^{m_1\times n}$ denotes the effective sensing matrix whose column vectors are selected from dictionary basis vectors ${\bf h}_{\ell}\in\mathbb{F}^{m_1\times 1}_p$.

We would like to recover ${\bf x} = g({\bf \tilde x})\in \mathbb{Z}_p^n$ from the effective measurement vector ${\bf \tilde y}={\bf H}{\bf \tilde x}\in \mathbb{F}_p^{m_1}$ in a noiseless setting and using one-to-one mapping $g(\cdot)$. Unlike the sparse recovery algorithm in a finite field in \cite{Draper}, we apply a syndrome decoding method \cite{Ancheta,Vishwanath2013}. Syndrome decoding harnesses the fact that there is a bijection mapping between a sparse signal (error) vector ${\bf x}$ and the effective measurement (syndrome) vector ${\bf \tilde y}$, provided that ${\bf x}$ contains at most $\lfloor\frac{d_{\rm min}}{2} \rfloor$ non-zero entries, i.e, $k\leq \lfloor\frac{d_{\rm min}}{2} \rfloor$. Recall that, in our construction, the $\ell$th dictionary vector ${\bf h}_{\ell} \in \mathbb{F}_p^{m}$ in ${\bf H}$ was generated from the mapping $h: \mathbb{F}_{p^s}^{{\tilde m}_1}\rightarrow \mathbb{F}_{p}^{m_1}$ where $m_1={\tilde m}_1s$, i.e., ${\bf h}_{\ell}=h({\bf \tilde h}_{\ell})$. Since $h$ is bijection, applying the inverse mapping function ${\bf \bar y}=h^{-1}({\bf \tilde y})\in \mathbb{F}_{p^s}^{{\tilde m}_1}$, we obtain the resultant measurement equation over $\mathbb{F}_{p^s}$ as
\begin{align}
{\bf \bar y} &=h^{-1}\left({\bf H}{\bf \tilde x}\right) =h^{-1}\left({\bf H}\right){\bf \tilde x}= {\bf \tilde  H}{\bf \tilde x},
\end{align}
where the second equality follows from $\mathbb{F}_p^n \subset \mathbb{F}_{p^s}^n$ and the last equality is due to the one-to-one mapping between ${\bf H}$ and ${\bf \tilde H}$ by $h(\cdot)$. Since ${\bf \tilde H}$ was selected from the parity-check matrix of the $p^s$-ary $[n,b]$-RS code whose minimum distance, $d_{\rm min}$ achieves a singleton bound, i.e., $d_{\rm min}= n-b+1={\tilde m}_1+1$. As a result, the syndrome decoding method allows us to recover the sparse signal perfectly, provided that
\begin{align}
k\leq \left\lfloor\frac{\tilde{m}_1+1}{2} \right\rfloor. \label{eq:syndrome_decoding}
\end{align}
Putting two inequalities in \eqref{eq:lattice_rate} and \eqref{eq:syndrome_decoding} together and using the fact $m_1={\tilde m}_1s$ and $s= \log_{p}(n)$, the number of required measurements for the sparse signal recover in the presence of Gaussian noise boils down to
\begin{align}
m\geq \frac{2k\log_p{n}}{1 - H_p(Z)},
\end{align}
which completes the proof.
\end{proof}

\textbf{Remark 1 (Decoding complexity)}:
The proposed two-stage decoding method can be  implemented with a polynomial time computational complexity. In the first stage, the lattice equation can be efficiently decoded with the $p$-level scalar quantizer in \cite{Hong_Caire2011} and the successive decoding algorithm of the polar code \cite{Arikan}, which essentially uses $\mathcal{O}(m\log(m))$ operations. Syndrome decoding used in the second stage can be implemented with polynomial time computational complexity algorithms such as Berlekamp-Massey algorithm, which requires $\mathcal{O}(nk)$ operations in $\mathbb{F}_{p^s}$. Considering $\mathbb{F}_{p^s}$ is a vector space over $\mathbb{F}_{p}$, this amount corresponds to $\mathcal{O}(nks^2)$ operations in $\mathbb{F}_{p}$. Since $n>m$ and $s= \log_{p}(n)$,  the overall computational complexity of the proposed method is at most $\mathcal{O}\left(nk\log^2 (n)\right)$ operations for recovery.

\textbf{Remark 2 (Universality of the measurement matrix)}:
The proposed coded compressive sensing method is universal, as it is possible to recover all $k$ sparse signals using a fixed sensing matrix ${\bf \Phi}$. This universality is practically important, because one may needs to randomly  construct a new measurement matrix ${\bf \Phi}$ for each signal. Some existing one-bit compressive sensing algorithms \cite{BIHT2013, Haupt_Baraniuk_2011,BP_onebit
} do not hold the universality property.

\textbf{Remark 3 (Non-integer sparse signal case)}:
One potential concern for our integer sparse signal setting  is that a sparse signal can have real value components in some applications. This concern can be resolved by exploiting an integer-forcing technique in which ${\bf x}\in\mathbb{R}^n$ is quantized into an integer vector ${\bf  x}_{{\rm int}}\in\mathbb{Z}^n$ and interpreting the residual ${\bf \Phi}({\bf x}-{\bf  x}_{{\rm int}})$ as additional noise. Then, the effective measurements are obtained as
\begin{align}
{\bf y}=S_{p}\left({\bf \Phi}{\bf  x}_{{\rm int}} +{\bf \tilde n}\right), \label{eq:system_eq2}
\end{align}
where ${\bf \tilde n}={\bf n}+{\bf \Phi}({\bf x}-{\bf  x}_{{\rm int}})$ denotes effective noise. Utilizing this modified equation, we are able to apply the proposed coded compressive sensing method to estimate the integer approximation ${\bf  x}_{{\rm int}}$. Assuming the non-zero values in ${\bf x}$ are bounded as  $|{\bf x}_i| \leq U$ for some $U\in\mathbb{R}^{+}$, we conjecture that the proposed scheme guarantees to recover the sparse signal with a bounded estimation error $\|{\bf x}-{\bf  x}_{{\rm int}}\|_2^2\leq k\frac{U^2}{p^2}$ with an increased number of measurements than that in Theorem 1. The rigorous proof of this conjecture will be provided in our journal version \cite{Lee_Hong_2016}.

\textbf{Remark 4 (Noiseless one-bit compressive sensing)}: One interesting scenario is that when a one-bit quantizer and a binary signal are used. In the case of noise-free, the number of required measurements for the perfect recovery is lower bounded by
\begin{align}
m\geq 2k\log_2{n}.
\end{align}

\begin{figure}
\centering \vspace{-0.3cm}
\includegraphics[width=3.4in]{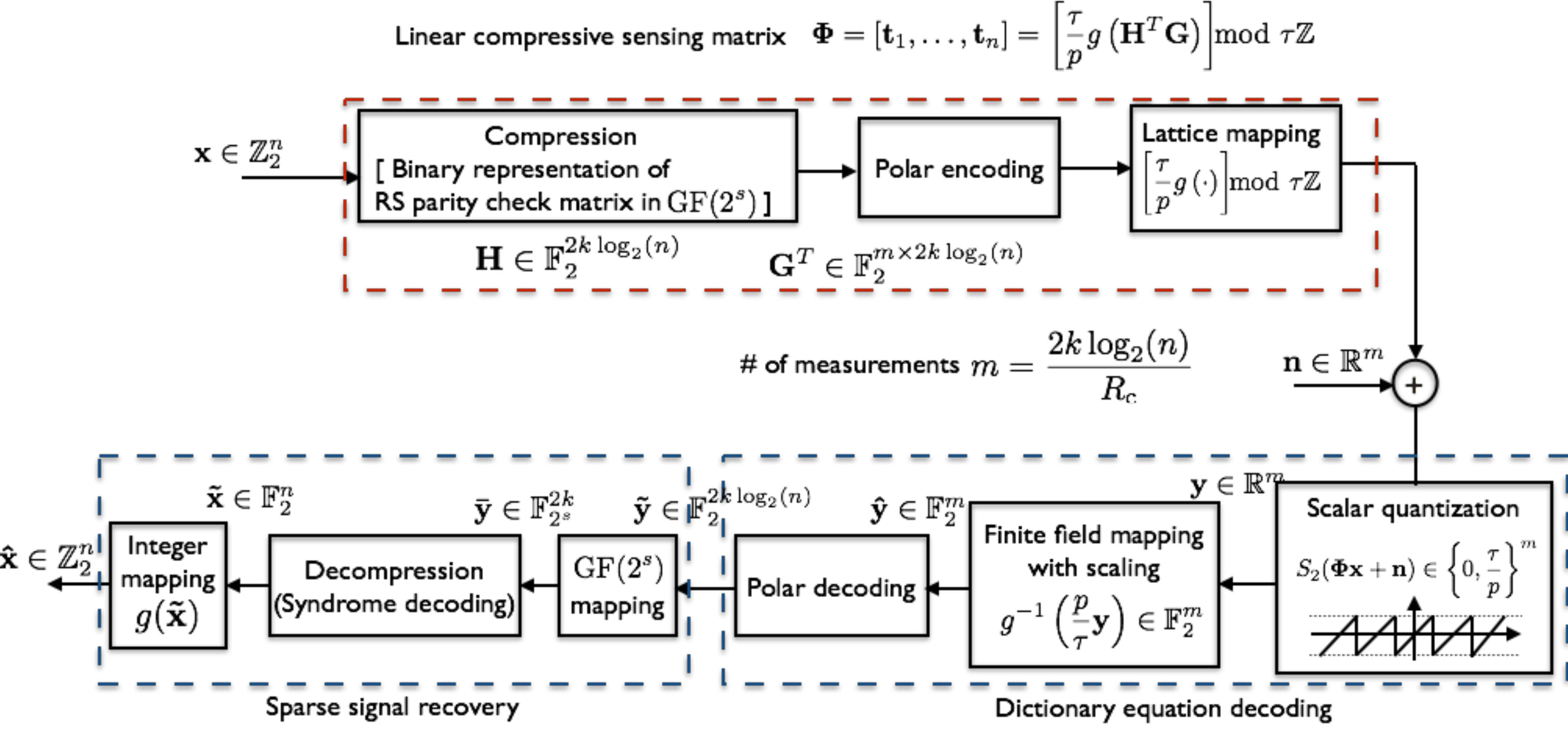} \vspace{-0.2cm}\caption{ The proposed coded compressive sensing framework for the binary sparse signal vector ${\bf x} \in \{0,1\}^n$ with one-bit and noisy measurements.} \label{fig:1} \vspace{-0.5cm}
\end{figure}

 \vspace{-0.2cm}
\section{Numerical Example }
In this section, we provide the signal recovery performance of the proposed coded compressive sensing method for $p=2$, i.e., one-bit compressive sensing, by numerical experiments.

To test the proposed algorithm, $k$-sparse binary vector ${\bf x}\in \mathbb{Z}_2^{511}$ is generated in which the non-zero positions of ${\bf x}$ is uniformly distributed between 1 and 511. A fixed binary sensing matrix ${\bf \Phi}\in \mathbb{F}_2^{180 \times 511}$ is designed by the concatenation of compression matrix ${\bf H}\in \mathbb{F}_2^{90 \times 511}$ and the generator matrix ${\bf G}^T\in \mathbb{F}_2^{180 \times 90}$ of polar code (which is completely determined by the rate-one Arikan's kernal matrix and the information set \cite{Arikan}), as illustrated in Fig. \ref{fig:1}. In particular, the binary compression matrix ${\bf H}$ is obtained from ${\bf \tilde H}$ that is the parity check matrix of ${\rm GF}(2^9)$-ary $[511, 501]$ RS code with the minimum distance of $10$. Therefore, it is perfectly able to perform syndrome decoding up to the sparsity level of 5 in a noiseless case. In addition, we pick the binary polar generator matrix ${\bf G}^T\in \mathbb{F}_2^{180\times 90}$  of code rate $\frac{1}{2}$. We evaluate the perfect recovery probability, i.e., $\mathbb{E}[{\bf 1}\left({\bf x}={\bf \hat x}\right)]$ of the sparse signal  in the presence of noise with variance $\sigma^2$ when the proposed algorithm is applied.

We compare our coded compressive sensing algorithm with the following two well-known one-bit compressive sensing algorithms with some modification for a binary signal.
\begin{itemize}
\item Convex optimization: a variant of the $\ell_1$-minimization method proposed in \cite{BP_onebit} for a binary sparse signal, which is summarized in Table \ref{table1};
\!\!\item Binary iterative hard thresholding (BIHT): a heuristic algorithm in \cite{BIHT2013} with some modifications for the binary signal recovery as in step 3) and 4) of Table \ref{table1}.
\end{itemize}
For the two modified reference algorithms, we use a Gaussian sensing matrix ${\bf \Phi}_{{\rm G}}\in \mathbb{R}^{180 \times 511}$ whose elements are drawn from IID Gaussian distribution $\mathcal{N}(0,\frac{1}{m})$. For each setting of $m$, $n$, $k$, and $\sigma^2$, we perform the recovery experiment for 500 independent trials, and compute the average of perfect recovery rate.

\begin{table}
\center
\caption{A Convex Optimization Algorithm for Binary Sparse Signal}\vspace{-0.2cm}
\begin{tabular}{|l|}
\hline\hline
1)  Initialization:\\
\hspace{5mm} Given $k$, $m$, $n$, $\sigma^2$, ${\bf \Phi}_{{\rm G}}$, ${\bf y}$, and $\mathcal{T}:=\{\emptyset\}$\\
\hline
2) Find $\mathbf{\hat x}$ solving the following convex optimization problem:\\
\hspace{30mm}$\min \|{\bf x}\|_1$ \\
\hspace{15mm}${\rm subject}~{\rm to}~~ ({\bf \Phi}_{{\rm G}}{\bf x})_i {\bf y}_i \geq 0 ~~{\rm for}~~ i\in\{1,\ldots,m\},$\\
\hspace{30mm}$  \sum_{i=1}^m({\bf \Phi}_{{\rm G}}{\bf x})_i {\bf y}_i=m$,\\
\hspace{30mm}$  {\bf x}\geq {\bf 0}$.\\
3) Select the $k$ largest index in $\mathbf{\hat x}$:\\
\hspace{10mm}  ${\mathcal T}=: \arg\max_{|{\mathcal T}|=K} \left\{|{\bf \hat x}|\right\}$.\\
4) Binary signal assignment in ${\mathcal T}$:\\
\hspace{10mm}  ${\bf \hat x}_{\mathcal T}=: {\bf 1}$ and ${\bf \hat x}_{{\mathcal T}^c}=: {\bf 0}$.\\
\hline\hline
\end{tabular}\label{table1}
\end{table}

\begin{figure}
\centering \vspace{-0.2cm}
\includegraphics[width=3.1in]{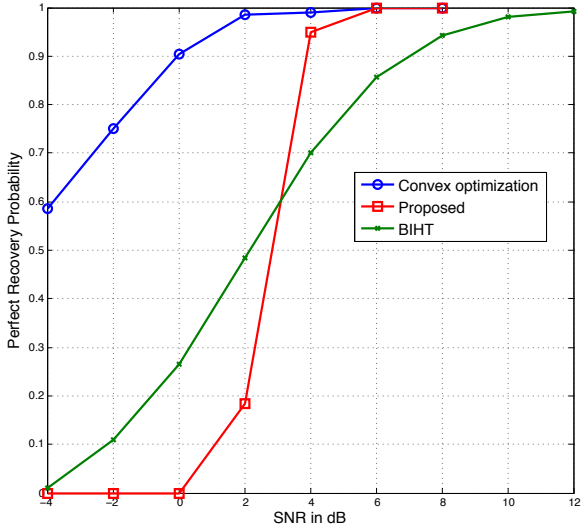} \vspace{-0.4cm}\caption{ Coded one-bit compressive sensing for the binary  sparse signal vector ${\bf x} \in \{0,1\}^n$.} \label{fig:2} \vspace{-0.4cm}
\end{figure}

Fig. \ref{fig:2} plots the perfect recovery probability versus SNR for each algorithm, when $n=511$, $m = 180$, and $k=5$. As can be seen in Fig. \ref{fig:2}, the proposed method outperforms BIHT significantly in terms of the perfect signal recovery performance. Specifically, BIHT is not capable of recovering the signal with high probability until SNR=12 dB, because there are a lot of sign flips in the measurements due to noise. Whereas the proposed algorithm is robust to noise; thereby it recovers the signal with probability one when SNR is 6 dB above. The convex optimization approach provides a better performance than the other algorithms; yet, it requires the computational complexity order of $\mathcal{O}(m^2n^3)$, which is much higher than that of the proposed one.

\vspace{-0.1cm}
 \section{Conclusion}
In this paper, we proposed a novel compressive sensing framework with noisy and quantized measurements for integer sparse signals. With this framework we derived the sufficient condition of the perfect recovery as a function of important system parameters. Considering one-bit compressive sensing as a special case, we demonstrated that the proposed algorithm empirically outperforms  the existing greedy recovery algorithm.

\end{document}